\documentclass[11pt]{article}

\usepackage[T1]{fontenc}
\usepackage[utf8]{inputenc}
\usepackage[margin=1in]{geometry}
\usepackage{graphicx}
\usepackage{amsmath,amssymb,amsfonts}
\usepackage{amsthm}
\usepackage{booktabs,tabularx}
\usepackage{microtype}
\usepackage{hyperref}
\hypersetup{colorlinks=true,linkcolor=blue,citecolor=blue,urlcolor=blue}
\theoremstyle{plain}
\newtheorem{proposition}{Proposition}[section]
\DeclareMathOperator{\col}{col}
\newcolumntype{Y}{>{\raggedright\arraybackslash}X}

\title{Theoretical--operational modelling of complex experiments:\\
parameter robustness and degeneracy in muon--electron conversion}
\author{Vitaly Pronskikh\\
\small Fermi National Accelerator Laboratory, Batavia, Illinois, USA\\
\small \texttt{vpronskikh@gmail.com}}
\date{}

\begin{document}
\maketitle

\begin{abstract}
Complex experiments infer theory parameters through a coupled chain of physical
models and data reduction.  We formulate the theoretical--operational model
(TOM) as a typed factorization of this forward prediction and study how
changes of preparation, phenomenon modelling, readout, backgrounds, and
analysis project onto the local manifold generated by the physics parameters.
For a smooth prediction and a locally identifiable weighted least-squares
estimate, the resulting response map separates each model deformation into a
parameter-equivalent component and a residual component that cannot be
absorbed by a change of the fitted physics parameters.  This gives local
criteria for robustness, exact degeneracy, and partial degeneracy of parameter
inference.  The construction is applied to charged-lepton-flavour-violating
muon--electron conversion in aluminium.  A one-bin conversion-rate model
exhibits an exact normalization degeneracy.  In a two-template model of the
elastic--inelastic spectrum, a common signal normalization is absorbed by the
fitted conversion rate without changing an operator-sensitive nuclear-response
ratio, whereas a relative elastic--inelastic efficiency change is exactly
parameter-equivalent to a change of that ratio at first order.  A numerical
Run-I example based on published Mu2e spectra shows that a
\(100\,\mathrm{keV}/c\) momentum-scale mismatch is only partially
parameter-equivalent: its projection ratio onto the local tangent space
generated by the elastic normalization \(R_0\) and the logarithmic
inelastic-to-elastic response ratio \(\rho\) is \(\eta=0.331\), while most of the
weighted spectral deformation remains as a residual shape.  TOM thereby
provides a local theoretical description of parameter robustness and
degeneracy at the interface of particle/nuclear phenomenology and
experimental realization.
\end{abstract}

\noindent\textbf{Keywords:} theoretical--operational modelling; local influence;
parameter degeneracy; Mu2e; charged-lepton flavour violation; inelastic
muon--electron conversion

\section{Introduction}
\label{sec:intro}

A precision experiment confronts a physical theory with data only through a
structured forward model.  A source or initial state is prepared, the
phenomenon of interest acts on that input, a detector converts the result into
recorded data, and an analysis procedure constructs the quantities used for
inference.  The resulting prediction depends simultaneously on physics
parameters and on auxiliary choices describing the realization of the
experiment.  A theoretical question therefore arises before any particular
statistical treatment is chosen: which changes of that realization are locally
equivalent, in observable space, to changes of the physics parameters, and
which changes remain distinguishable?

The theoretical--operational model (TOM) develops the operational viewpoint
discussed previously in Ref.~\cite{Pronskikh2020} into a local theory of
parameter response.  Its starting point is a typed factorization of the physical forward model.
Schematically,
\begin{equation}
 \mathcal P\longrightarrow\mathcal X_\theta\longrightarrow\mathcal M
 \longrightarrow\lambda_{\rm sig},\qquad
 \lambda_{\rm tot}=\lambda_{\rm sig}+\sum_b\lambda_b
 \xrightarrow{\ A\ }\boldsymbol\mu\longrightarrow\widehat\theta.
 \label{eq:tom-overview}
\end{equation}
Here \(\mathcal P\) is preparation, \(\mathcal X_\theta\) is the model of the
specified phenomenon with physics parameters \(\theta\), \(\mathcal M\) is
detector readout, and \(A\) is classical analysis.  The finite measure
\(\lambda_{\rm sig}\) is the expected record generated by the phenomenon branch,
while separate branches \(\lambda_b\) describe physical backgrounds that can
mimic it.  The complete expected record \(\lambda_{\rm tot}\) is mapped by the
analysis to the scalar or vector prediction \(\boldsymbol\mu\) used to obtain
\(\widehat\theta\).

Preparation, phenomenon, readout, analysis, and estimation have different
domains and codomains; TOM unifies them through the typed composition in
Eq.~\eqref{eq:tom-overview}.  This functional separation has antecedents in
operational formulations of quantum mechanics \cite{Fock1957,Busch1995}.  Here
it is used to study the local relation between deformations of the full forward
model and displacements of the inferred theory parameters.

Source, detector, calibration, background, and analysis uncertainties are
represented by nuisance parameters in standard experimental statistics.  TOM
focuses on the geometry of the corresponding local response: the columns of
the physics Jacobian span the observable-space directions generated by changes
of the theory parameters, while deformations of the remaining model generate
additional directions that may lie inside, outside, or partly inside that
span.  This distinguishes changes that are parameter-equivalent from changes
that leave a residual not reproducible by varying the physics parameters.  The
construction is closely related to local-influence and sensitivity calculations
\cite{Cook1986}; the TOM factorization retains the physical origin of each
deformation while making its relation to the theory-parameter manifold
explicit.

The phenomenon theory also constrains the admissible realization of the
experiment: it determines which parameters are of interest, which preparations
are relevant, and which features of the final state must be distinguished.
The arrows in Eq.~\eqref{eq:tom-overview} describe the forward flow of states
and records.  Preparation and measurement may still depend on the theory being
tested.

Section~\ref{sec:formalism} defines the typed forward model.
Section~\ref{sec:sensitivity} derives the local parameter-response map and its
degeneracy structure.  Section~\ref{sec:mu2e} applies the construction to
elastic and inelastic \(\mu^-\!\to e^-\) conversion in \(^{27}\mathrm{Al}\),
where the inferred quantities are tied to CLFV and nuclear-response
phenomenology.

\section{Theoretical--operational forward model}
\label{sec:formalism}

\subsection{Typed composition}
\label{subsec:stages}

Let \(\Theta\subset\mathbb R^d\) be the parameter space of the phenomenon
model, let \(\mathcal S_{\rm in}\) and \(\mathcal S_{\rm out}\) be its input
and output description spaces, and let \(\Omega\) be the space of recorded
outcomes.  Denote by \(\mathfrak M_+(\Omega)\) the finite nonnegative measures
on \(\Omega\).  The preparation specifies an exposure \(N\) and an input state
or ensemble \(s_P\).  The four stages have the types
\begin{equation}
 \begin{aligned}
 \mathcal P &:U_P\longrightarrow\mathbb R_+\!\times\mathcal S_{\rm in},
 &\mathcal P(u_P)&=\bigl(N(u_P),s_P(u_P)\bigr),\\
 \mathcal X_{\theta,u_X}&:\mathcal S_{\rm in}\longrightarrow\mathcal S_{\rm out},
 &\mathcal M_{u_M}&:\mathcal S_{\rm out}\longrightarrow\mathfrak M_+(\Omega),\\
 A_{u_A}&:\mathfrak M_+(\Omega)\longrightarrow\mathbb R^k.&&
 \end{aligned}
 \label{eq:tom-types}
\end{equation}
When \(\mathcal S_{\rm in}\) and \(\mathcal S_{\rm out}\) are quantum state
spaces, \(\mathcal X_{\theta,u_X}\) may be realized by a quantum operation,
while \(\mathcal M_{u_M}\) is the outcome-law map induced by a
positive-operator-valued measure (POVM).  If post-measurement states are part
of the description, a quantum instrument supplies the corresponding
refinement \cite{Busch1995,Watrous2018}.  The developments below use only the
resulting finite measure on \(\Omega\), so they do not depend on this quantum
specialization.

The collective model coordinates are
\[
 u=(u_P,u_X,u_M,u_A,u_{\rm bg})\in U,
\]
where \(U\) is the local coordinate domain for the auxiliary stage variables.
Here \(u_P,u_M,u_A\) describe the experimental realization and analysis,
whereas \(u_X\) denotes phenomenon-model inputs not selected as parameters of
interest.  The latter may be fixed, constrained, profiled, or marginalized.
The background coordinate block \(u_{\rm bg}\) collects the individual \(u_b\) and is absent when no explicit
background branches are required.

The expected signal record is
\begin{equation}
 \lambda_{\rm sig}(\cdot\mid\theta,u_P,u_X,u_M)
 =N(u_P)\,
 \mathcal M_{u_M}\!\left[
   \mathcal X_{\theta,u_X}\!\left(s_P(u_P)\right)
 \right].
 \label{eq:signal-record}
\end{equation}
Each mimicking physical process is represented by a separate expected record
measure \(\lambda_b(\cdot\mid u_P,u_M,u_b)\).  The complete record and analysed
prediction are
\begin{equation}
 \lambda_{\rm tot}
 =\lambda_{\rm sig}+\sum_b\lambda_b,
 \qquad
 \boldsymbol\mu(\theta,u)=A_{u_A}[\lambda_{\rm tot}].
 \label{eq:tom-forward}
\end{equation}
For a binned count analysis, \(A\) simply integrates the measure over the
selected bins, so \(\mu_i=\lambda_{\rm tot}(\Delta_i)\) and signal and background
counts add linearly.  More elaborate classical summaries can be represented by
other choices of \(A\).

Stage assignment is determined by physical role, not merely by temporal order.
All source, transport, interaction, and stopping processes needed to create
\(s_P\) belong to preparation.  The map \(\mathcal X\) is reserved for the
phenomenon whose physics parameters are being inferred.  Interactions that turn
its output into a detector record belong to readout, whereas distinct physical
processes capable of producing a mimicking record are backgrounds.  This separation is functional.

\subsection{Model coordinates and admissible deformations}
\label{subsec:stage-changes}

A reference realization is specified by \(u_0\).  A small continuous change is
written
\[
 \delta u=(\delta u_P,\delta u_X,\delta u_M,\delta u_A,
           \delta u_{\rm bg}).
\]
It may represent a change in source conditions, phenomenon-model inputs,
detector calibration or response, analysis choices, or a background model.
When one physical quantity affects several stages, its total effect is obtained
by differentiating through every affected stage; it is not duplicated as
independent nuisance parameters.  Discrete changes, such as replacing a
reconstruction algorithm, are compared by finite differences rather than by an
infinitesimal derivative.

\section{Parameter response, robustness, and degeneracy}
\label{sec:sensitivity}

\subsection{Local weighted least-squares estimator}
\label{subsec:local-estimator}

Let
\begin{equation}
 \mu:\Theta\times U\longrightarrow\mathbb R^k
 \label{eq:prediction-map}
\end{equation}
be the prediction defined by the complete TOM chain.  At a nominal point
\((\theta_0,u_0)\), assume that (i) the chosen coordinate domains are open in a
neighbourhood of \((\theta_0,u_0)\), (ii) \(\mu\) is twice continuously differentiable, (iii)
\(y_0=\mu(\theta_0,u_0)\), (iv) \(W\) is fixed, symmetric, and positive
definite, and (v)
\begin{equation}
 J_\theta
 =\left.\frac{\partial\mu}{\partial\theta}\right|_{(\theta_0,u_0)}
 \label{eq:physics-jacobian}
\end{equation}
has full column rank \(d\).  The last condition is local identifiability with
the model coordinates fixed.  If the original parameterization contains
unidentifiable directions, it must first be replaced locally by identifiable
coordinates.

For \(y\) and \(u\) close to their nominal values define
\begin{equation}
 Q(\vartheta;y,u)
 =\frac12[y-\mu(\vartheta,u)]^{\top} W[y-\mu(\vartheta,u)].
 \label{eq:wls-objective}
\end{equation}
The estimator \(\widehat\theta(y,u)\) is the strict local minimizer on the branch
through \(\widehat\theta(y_0,u_0)=\theta_0\).  The construction is local to this
branch and makes no global existence or uniqueness claim.

\begin{proposition}[Local estimator response]
\label{prop:local-estimator-response}
Under the assumptions above, there are neighbourhoods of \((y_0,u_0)\) and
\(\theta_0\) in which the stationary branch \(\widehat\theta(y,u)\) is unique,
is a strict local minimizer, and is continuously differentiable.  Its
first-order response is
\begin{equation}
 \widehat\theta(y_0+\delta y,u_0+\delta u_{\rm fit})-\theta_0
 =J_{\theta,W}^{+}\left(\delta y-J_u\delta u_{\rm fit}\right)
 +o\!\left(\|\delta y\|+\|\delta u_{\rm fit}\|\right),
 \label{eq:master-estimator-response}
\end{equation}
where
\begin{equation}
 J_u=\left.\frac{\partial\mu}{\partial u}\right|_{(\theta_0,u_0)},
 \qquad
 J_{\theta,W}^{+}
 =\left(J_\theta^{\top}WJ_\theta\right)^{-1}J_\theta^{\top}W.
 \label{eq:weighted-left-inverse}
\end{equation}
\end{proposition}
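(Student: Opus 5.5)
The plan is to apply the implicit function theorem to the first-order stationarity condition of the objective \eqref{eq:wls-objective}. Define
\[
 F(\vartheta;y,u)=\nabla_\vartheta Q(\vartheta;y,u)
 =-\left(\frac{\partial\mu}{\partial\vartheta}(\vartheta,u)\right)^{\!\top}W\,[y-\mu(\vartheta,u)],
\]
which is continuously differentiable in \((\vartheta,y,u)\) because \(\mu\) is \(C^2\) by assumption (ii). At the nominal point, assumption (iii) gives \(y_0-\mu(\theta_0,u_0)=0\), so \(F(\theta_0;y_0,u_0)=0\).

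The derivatives of \(F\) at \((\theta_0;y_0,u_0)\) are easy to compute. Every term involving second derivatives of \(\mu\) is multiplied by the vanishing residual, so they drop out. This leaves
\[
 \partial_\vartheta F=J_\theta^\top W J_\theta,\qquad
 \partial_y F=-J_\theta^\top W,\qquad
 \partial_u F=J_\theta^\top W J_u .
\]
By (iv) and (v), \(J_\theta^\top WJ_\theta\) is symmetric positive definite, hence invertible. Assumption (i) supplies open domains, so the implicit function theorem applies. It gives neighbourhoods of \((y_0,u_0)\) and \(\theta_0\) in which \(F=0\) has exactly one solution \(\widehat\theta(y,u)\). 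This solution is \(C^1\), and its derivative is \(-(\partial_\vartheta F)^{-1}(\partial_y F,\partial_u F)\). That derivative equals \(J_{\theta,W}^{+}\) in the \(y\)-direction and \(-J_{\theta,W}^{+}J_u\) in the \(u\)-direction. Differentiability at \((y_0,u_0)\) then yields \eqref{eq:master-estimator-response}, with a remainder that is \(o(\|\delta y\|+\|\delta u_{\rm fit}\|)\).

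It remains to show that the solution is a strict local minimizer. The Hessian \(\partial_\vartheta F(\vartheta;y,u)\) depends continuously on all its arguments and is positive definite at the nominal point. Therefore, after shrinking the neighbourhoods, it stays positive definite for all \(\vartheta\) near \(\theta_0\) and \((y,u)\) near \((y_0,u_0)\). It follows that \(Q(\cdot;y,u)\) is strictly convex on a fixed ball around \(\theta_0\). Hence the stationary point \(\widehat\theta(y,u)\) in that ball is the unique strict local minimizer there.

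The main obstacle is bookkeeping the neighbourhoods. The \(\vartheta\)-ball must be chosen independently of \((y,u)\), so that the uniqueness statement from the implicit function theorem and the strict convexity hold on the same set. Continuity of \(\widehat\theta\) then keeps the branch inside that ball. This is handled by shrinking the \((y,u)\)-neighbourhood once the \(\vartheta\)-ball is fixed.
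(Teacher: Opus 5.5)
Your proposal is correct and follows essentially the same route as the paper. Both apply the implicit function theorem to the weighted least-squares stationarity equation, using that $\partial_\vartheta F = J_\theta^\top W J_\theta$ is positive definite, invoke continuity of the Hessian for strict minimality, and differentiate at the nominal point to obtain the response. Your version additionally makes explicit two points the paper leaves implicit: the second-derivative terms vanish there because the nominal residual is zero, and the $\vartheta$-ball on which $Q$ is strictly convex is fixed before the $(y,u)$-neighbourhood is shrunk.
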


\begin{proof}
The stationarity equation for Eq.~\eqref{eq:wls-objective} is
\begin{equation}
 g(\vartheta;y,u)
 =D_\theta\mu(\vartheta,u)^{\top}W[\mu(\vartheta,u)-y]=0.
 \label{eq:wls-stationarity}
\end{equation}
At the nominal exact fit, \(g(\theta_0;y_0,u_0)=0\), and
\[
 \left.\frac{\partial g}{\partial\vartheta}\right|_{(\theta_0;y_0,u_0)}
 =J_\theta^{\top}WJ_\theta,
\]
which is positive definite by the rank assumption.  The implicit-function
theorem gives a unique differentiable stationary branch, and continuity of the
Hessian makes it a strict local minimum in a sufficiently small neighbourhood.
Differentiating Eq.~\eqref{eq:wls-stationarity} at the nominal point gives
\[
 J_\theta^{\top}W\left(J_\theta\,\delta\widehat\theta
      +J_u\delta u_{\rm fit}-\delta y\right)=0,
\]
which yields Eq.~\eqref{eq:master-estimator-response}.
\end{proof}

For a fixed observed summary, reanalysis with
\(u_{\rm fit}=u_0+\delta u_{\rm fit}\) therefore gives
\begin{equation}
 \delta\widehat\theta_{\rm fixed\ summary}
 =-J_{\theta,W}^{+}J_u\delta u_{\rm fit}
 +o(\|\delta u_{\rm fit}\|).
 \label{eq:fixed-data-shift}
\end{equation}
If changing \(u_A\) also changes the summary computed from the raw data, that
effect belongs in \(\delta y\); Eq.~\eqref{eq:master-estimator-response} keeps
the two contributions separate.

\subsection{Mismatch between realized and fitted experiments}
\label{subsec:true-fit-mismatch}

Suppose that the expected summary is generated at
\(u_{\rm true}=u_0+\delta u_{\rm true}\), while the prediction used in the fit
assumes \(u_{\rm fit}=u_0+\delta u_{\rm fit}\).  For an Asimov data set, or for
the systematic part of the expected summary,
\[
 \delta y=J_u\delta u_{\rm true}+o(\|\delta u_{\rm true}\|).
\]
Substitution into Eq.~\eqref{eq:master-estimator-response} gives
\begin{equation}
 \Delta\theta_{\rm sys}
 =J_{\theta,W}^{+}J_u
   (\delta u_{\rm true}-\delta u_{\rm fit})
 +o\!\left(\|\delta u_{\rm true}\|+\|\delta u_{\rm fit}\|\right).
 \label{eq:true-fit-parameter-shift}
\end{equation}
Thus the sign is determined by the mismatch between the realized and fitted
stage descriptions.  If both change consistently, there is no first-order
nominal bias, although the statistical precision may still change.  A
statistical fluctuation \(\epsilon=y-\mathbb E[y]\) adds the local term
\(J_{\theta,W}^{+}\epsilon\).  Equation~\eqref{eq:true-fit-parameter-shift} is
not a claim about exact finite-sample bias.

\subsection{Theoretical--operational response map}
\label{subsec:stage-resolved}

The stage Jacobian inherits the typed decomposition,
\begin{equation}
 J_u=\begin{bmatrix}J_P&J_X&J_M&J_A&J_{\rm bg}\end{bmatrix},
 \qquad
 J_s=\left.\frac{\partial\mu}{\partial u_s}\right|_{(\theta_0,u_0)}.
 \label{eq:stage-jacobians}
\end{equation}
Define the component-resolved parameter-response matrices
\begin{equation}
 B_s=J_{\theta,W}^{+}J_s,
 \qquad
 B=\begin{bmatrix}B_P&B_X&B_M&B_A&B_{\rm bg}\end{bmatrix}
 =J_{\theta,W}^{+}J_u.
 \label{eq:stage-parameter-response}
\end{equation}
With \(v_s=\delta u_{s,{\rm true}}-\delta u_{s,{\rm fit}}\),
Eq.~\eqref{eq:true-fit-parameter-shift} becomes
\begin{equation}
 \Delta\theta_{\rm sys}
 =B_Pv_P+B_Xv_X+B_Mv_M+B_Av_A+B_{\rm bg}v_{\rm bg}
 +o(\|v\|).
 \label{eq:stagewise-bias}
\end{equation}
The matrix \(B=J_{\theta,W}^{+}J_u\) is the principal local response map of
TOM: it maps deformations of the theoretical--operational model to
displacements in the inferred physics-parameter space.  Its blocks \(B_s\)
resolve that map according to the physical origin of the deformation---
preparation, phenomenon modelling, readout, analysis, or background.  A shared
coordinate is treated with the total derivative through every component it
affects, not as independent copies in several blocks.

\subsection{Robustness of the forward and inference maps}
\label{subsec:robustness-levels}

For a mismatch direction
\(v=\delta u_{\rm true}-\delta u_{\rm fit}\), three local notions of robustness
should be distinguished.  \textbf{Record-level robustness} means that, for a
direction that changes the physical or background stages but not the analysis
map,
\begin{equation}
 D_{(u_P,u_X,u_M,u_{\rm bg})}\lambda_{\rm tot}[v_P,v_X,v_M,v_{\rm bg}]=0.
 \label{eq:record-robustness}
\end{equation}
The complete expected record, including exposure and backgrounds, is then
unchanged to first order.  \textbf{Summary-level robustness} requires
\begin{equation}
 J_uv=0,
 \label{eq:summary-robustness}
\end{equation}
so that the analysed prediction is unchanged.  \textbf{Estimator-level
robustness} requires only
\begin{equation}
 Bv=J_{\theta,W}^{+}J_uv=0.
 \label{eq:estimator-robustness}
\end{equation}
For a parameter combination \(L\theta\), the corresponding condition is
\(LBv=0\).

For a fixed analysis map, record-level robustness implies summary-level
robustness, which in turn implies estimator-level robustness.  The converses do
not generally hold.  If \(u_A\) changes, the expected record may be unchanged
while the summary changes.  Estimator-level robustness is therefore relative to
the selected summary, estimator, weight matrix, and nominal point.  It is a
local sensitivity property of the chosen inference setup.

\subsection{Exact and partial parameter-equivalent deformations}
\label{subsec:operational-degeneracy}

A parameter response can occur without exact degeneracy.  Let
\begin{equation}
 q=J_uv
 \label{eq:stage-summary-direction}
\end{equation}
be the summary-space direction produced by a model mismatch.  Define the
\(W\)-orthogonal projector onto the parameter-sensitive column space by
\begin{equation}
 \Pi_\theta
 =J_\theta(J_\theta^{\top}WJ_\theta)^{-1}J_\theta^{\top}W
 =J_\theta J_{\theta,W}^{+}.
 \label{eq:parameter-projector}
\end{equation}
Then
\begin{equation}
 q=q_\parallel+q_\perp,
 \qquad
 q_\parallel=\Pi_\theta q,
 \qquad
 q_\perp=(I-\Pi_\theta)q.
 \label{eq:stage-decomposition}
\end{equation}
A nonzero deformation is locally parameter-equivalent at first order when its
summary-space direction satisfies
\begin{equation}
 q_\perp=0,
 \qquad\text{equivalently}\qquad
 q\in\col(J_\theta)\setminus\{0\}.
 \label{eq:exact-degeneracy}
\end{equation}
If both components are nonzero, the parameter equivalence is only partial: the
parallel component is reproduced by a displacement of the physics parameters,
while the perpendicular component remains as a residual outside the local
theory-parameter tangent space.  If
\(q_\parallel=0\) but \(q\neq0\), the summary changes while the local estimate is
robust.

For \(q\neq0\), define the projection ratio
\begin{equation}
 \eta(v)=\frac{\|\Pi_\theta q\|_W}{\|q\|_W},
 \qquad
 \|z\|_W=(z^{\top}Wz)^{1/2},
 \qquad 0\leq\eta\leq1.
 \label{eq:degeneracy-ratio}
\end{equation}
Thus \(\eta=1\) denotes exact local parameter equivalence, \(0<\eta<1\)
partial equivalence, and \(\eta=0\) a deformation orthogonal to the local
theory-parameter tangent space.  The magnitude of the parameter shift also depends on the conditioning of
\(J_\theta\); \(\eta\) therefore quantifies alignment, not the size of the bias.

This construction is directly related to the information blocks used in
nuisance-parameter analysis.  For fixed positive-definite \(W\),
\begin{equation}
 G_{\theta\theta}=J_\theta^{\top}WJ_\theta,
 \qquad
 G_{\theta u}=J_\theta^{\top}WJ_u,
 \qquad
 B=G_{\theta\theta}^{-1}G_{\theta u}.
 \label{eq:fisher-block-relation}
\end{equation}
For a local Gaussian model with \(W=\Sigma^{-1}\), these are the usual
information blocks; for independent Poisson bins, the corresponding Asimov
curvature uses \(W=\operatorname{diag}(1/\mu_i)\).  If model coordinates are
fitted jointly with \(\theta\), identifiability must instead be assessed from
the combined Jacobian together with the associated auxiliary constraints
\cite{Kay1993,CowanEtAl2011}.

\section{Application to charged-lepton flavour violation: muon--electron conversion in aluminium}
\label{sec:mu2e}

\subsection{Phenomenological setting and theoretical--operational chain}
\label{subsec:mu2e-question}

Mu2e searches for charged-lepton-flavour-violating (CLFV) conversion of a bound
negative muon into an electron in the field of an aluminium nucleus.  The
primary search is for coherent, elastic conversion,
\begin{equation}
 \mu^-+{}^{27}\mathrm{Al}\longrightarrow e^-+{}^{27}\mathrm{Al},
 \label{eq:elastic-conversion}
\end{equation}
whose electron momentum is close to \(104.97\,\mathrm{MeV}/c\).  The standard
rate is
\begin{equation}
 R_{\mu e}
 =\frac{\Gamma(\mu^-A\to e^-A)}{\Gamma(\mu^-A\to\nu_\mu A')},
 \label{eq:Rmue}
\end{equation}
where the denominator is the ordinary nuclear-capture rate
\cite{KunoOkada2001,Kitano2002,Mu2eTDR2015}.

At nuclear level the conversion rates may be written schematically as
\begin{equation}
 R_f(\boldsymbol c,u_X)=\boldsymbol c^\dagger H_f(u_X)\boldsymbol c,
 \label{eq:nuclear-response}
\end{equation}
where \(\boldsymbol c\) is a vector of low-energy CLFV couplings and \(H_f\)
is constructed from nuclear responses.  A single target-dependent conversion
rate cannot identify a general vector \(\boldsymbol c\)
\cite{Cirigliano2009}.  Additional operator-sensitive information can appear
in conversion to low-lying excited states of \(^{27}\mathrm{Al}\).  The first
three relevant levels have excitation energies \(0.844\), \(1.015\), and
\(2.212\,\mathrm{MeV}\), corresponding to conversion-electron momenta of
approximately \(104.13\), \(103.96\), and \(102.77\,\mathrm{MeV}/c\)
\cite{HaxtonRule2024,HaxtonRule2025}.  Nuclear effective theory predicts
operator-dependent elastic and inelastic response patterns
\cite{RuleEtAl2023,HaxtonEtAl2023,HaxtonRule2024,HaxtonRule2025}.  For
hypotheses with a nonzero elastic rate we therefore use the response ratios
\begin{equation}
 r_f(\boldsymbol c,u_X)=\frac{R_f(\boldsymbol c,u_X)}{R_0(\boldsymbol c,u_X)},
 \qquad \rho_f=\log r_f.
 \label{eq:inelastic-ratio}
\end{equation}
Inelastic-only responses, for which the elastic denominator vanishes by a
selection rule, require separate absolute amplitudes rather than these ratios.
The TOM question is which deformations of the complete forward model are locally
equivalent to changes of an operator-sensitive response ratio, and which leave
a distinguishable spectral residual.

The public Mu2e Run-I projection provides a realistic experimental realization
for this question.  It uses \(6\times10^{16}\) stopped muons, with optimized
selection \(103.60<p_{\rm rec}<104.90\,\mathrm{MeV}/c\) and
\(640<T_0<1650\,\mathrm{ns}\), a single-event sensitivity of
\(2.4\times10^{-16}\), and total expected background
\(0.105\pm0.032\) events \cite{Mu2eRunI2023}.  In the corresponding one-bin
approximation,
\begin{equation}
 \mu_{\rm win}(R_{\mu e})
 \simeq\frac{R_{\mu e}}{2.4\times10^{-16}}+0.105.
 \label{eq:run1-one-bin}
\end{equation}
The numerical example below uses these projected spectra as a benchmark for
local parameter degeneracy; it is not an official Mu2e sensitivity result.

For this problem the signal record is
\begin{equation}
 \lambda_{\rm sig}^{\mu e}
 =N_\mu^{\rm stop}(u_P)\,
 \mathcal M_{u_M}\!\left[
  \mathcal X^{\mu e}_{\theta,u_X}(s_{\mu{\rm Al}}(u_P))
 \right],
 \qquad
 \boldsymbol\mu
 =A_{u_A}\!\left[\lambda_{\rm sig}^{\mu e}+\sum_b\lambda_b\right].
 \label{eq:mu2e-chain}
\end{equation}
The phenomenon map \(\mathcal X^{\mu e}\) represents only the neutrinoless
conversion \(\mu^-+{}^{27}{\rm Al}\to e^-+{}^{27}{\rm Al}^{(*)}\).
Production-target interactions, pion production and decay, muon transport,
stopping, atomic binding, and the cascade into a bound muonic state belong to
preparation.  Detector material interactions and reconstruction of the
conversion electron belong to readout.  Muon decay in orbit (DIO) and other processes capable of
producing a conversion-like record are separate background branches.
Table~\ref{tab:mu2e-stages} summarizes the factorization.

\begin{table}[ht]
\centering
\caption{Factorization of the Mu2e/CLFV forward model.}
\label{tab:mu2e-stages}
\small
\begin{tabularx}{\textwidth}{@{}p{0.18\textwidth}Y@{}}
\toprule
Component & Representative coordinates and functions \\
\midrule
Preparation \(u_P\) &
Production-target interactions, pion production and decay, muon transport and
selection, stopping in aluminium, atomic binding and cascade, stopped-muon
exposure and space--time distribution. \\
Phenomenon \(\mathcal X^{\mu e}_{\theta,u_X}\) &
Neutrinoless muon--electron conversion, its selected CLFV parameters, elastic
and inelastic nuclear responses, and conversion final-state energies. \\
Readout \(u_M\) &
Post-conversion electron propagation through target and detector material,
tracker momentum scale, resolution and tails, tracking efficiency, timing
response, and cosmic-ray-veto response. \\
Analysis \(u_A\) &
Momentum--time selections, binning, live gate, event classes, and the summary
constructed for inference. \\
Backgrounds \(\lambda_b,u_b\) &
DIO and other muon-, beam-, or cosmic-induced processes that can produce an
accepted conversion-like record; each has its own physical model and
normalization. \\
\bottomrule
\end{tabularx}
\end{table}

This separation affects the physics information retained in the observable.
The momenta quoted above are conversion momenta before detector
energy loss and reconstruction, whereas the Run-I selection is imposed on
reconstructed momentum.  After detector response, substantial portions of the
first two inelastic components populate the region around and below the lower
boundary of the standard search window, while the third lies farther below it.
Extending the reconstructed-momentum range therefore retains more of the
operator-sensitive inelastic structure, together with additional DIO
background and response-model dependence.

\subsection{Forward model and nuclear-response parameterization}
\label{subsec:mu2e-forward}

Let \(\Delta_i\) be a reconstructed momentum--time bin and define
\(f_{{\rm cap},A}=\Gamma_{{\rm cap},A}/\Gamma_{{\rm tot},A}\) and
\(f_{{\rm dec},A}=\Gamma_{{\rm dec},A}/\Gamma_{{\rm tot},A}\).  A binned
expected-count model is
\begin{align}
 \mu_i(\theta,u)
 &=\sum_f s_{if}(\theta,u_P,u_X,u_M,u_A)
   +b_i^{\rm DIO}(u_P,u_M,u_A,u_{\rm DIO})
   +\sum_{b\ne{\rm DIO}}b_{ib}(u_P,u_M,u_A,u_b),
 \label{eq:mu2e-counts}\\
 s_{if}
 &=N_\mu^{\rm stop}(u_P)f_{{\rm cap},A}
   R_f(\theta,u_X)\,\varepsilon_{if}(u_P,u_X,u_M,u_A),
 \label{eq:mu2e-signal-counts}
\end{align}
where \(R_f=\Gamma_f/\Gamma_{\rm cap}\).  Writing
\(y=(p_{\rm rec},t_{\rm rec})\), the efficiency may be represented schematically
as
\begin{equation}
 \varepsilon_{if}
 =\int_{\Delta_i}dy\,K_f(y;u_P,u_X,u_M)\,a_f(y;u_A),
 \label{eq:mu2e-efficiency}
\end{equation}
where \(K_f\) is obtained by applying the readout model to conversion final
state \(f\) and averaging over the stopped-muon ensemble supplied by
\(\mathcal P\).  Its dependence on \(u_P\) enters through that ensemble,
while \(u_X\) fixes the relevant final-state kinematics; material transport and
detector response enter through \(u_M\).  The factor \(a_f\) contains the
remaining classical acceptance.  A detailed detector simulation can be
substituted for \(K_f\); no additional detector formalism is required by the
TOM construction.

The DIO background is a distinct physical process.  If
\(d_{\rm DIO}(q;u_{\rm DIO})\) denotes its normalized true-momentum spectrum,
then
\begin{equation}
 b_i^{\rm DIO}
 =N_\mu^{\rm stop}f_{{\rm dec},A}
  \int dq\,d_{\rm DIO}(q;u_{\rm DIO})
  \int_{\Delta_i}dy\,K_{\rm DIO}(y\mid q;u_P,u_M)a_{\rm DIO}(y;u_A).
 \label{eq:mu2e-dio-counts}
\end{equation}
Near its endpoint the true DIO spectrum falls steeply, approximately as the
fifth power of the distance from the endpoint before convolution with the
response \cite{Czarnecki2011}.  Consequently, small momentum-scale and tail
changes can cause large changes in signal-window leakage.  In the Run-I study,
a \(100\,\mathrm{keV}/c\) momentum-scale shift changed the estimated DIO
background by \(+59\%/-37\%\) \cite{Mu2eRunI2023}.  This provides a momentum-response deformation with which to test the local
parameter-equivalence construction.

\subsection{Worked local model}
\label{subsec:mu2e-worked}

For an analytic demonstration, fix a benchmark operator and nuclear model, and
combine its selected excited-state contributions into one inelastic
template \(\boldsymbol h_1\).  Let \(\boldsymbol h_0\) and \(\boldsymbol h_1\)
be expected-bin yields per unit elastic and selected-inelastic conversion rate.
The phenomenon-model coordinate \(u_X\) is held fixed in this demonstration;
experimental variations enter the templates through
\(u_{\rm exp}=(u_P,u_M,u_A)\).  The fitted physics coordinates are
\(\theta=(R_0,\rho)\), with \(\rho=\log(R_{\rm inel}/R_0)\).  The
background-inclusive prediction is
\begin{equation}
 \boldsymbol\mu(R_0,\rho,u)
 =R_0\left[\boldsymbol h_0(u_{\rm exp})
          +e^\rho\boldsymbol h_1(u_{\rm exp})\right]
  +\boldsymbol b(u_P,u_M,u_A,u_{\rm bg}).
 \label{eq:two-template}
\end{equation}
This restricted two-parameter model is used to study local degeneracy.  The
statements below assume that \(R_0\) and \(\rho\) are freely fitted and that their
template derivatives are linearly independent.

Equation~\eqref{eq:two-template} gives
\begin{equation}
 J_{R_0}=\boldsymbol h_0+e^\rho\boldsymbol h_1,
 \qquad
 J_\rho=R_0e^\rho\boldsymbol h_1.
 \label{eq:mu2e-parameter-jacobians}
\end{equation}
It yields three parameter-degeneracy results.

\medskip
\noindent\textbf{Absolute-rate degeneracy.}\ 
After background subtraction or profiling, a one-bin elastic signal model has
\begin{equation}
 \mu_{\rm sig}=N_\mu^{\rm stop}f_{\rm cap}R_{\mu e}\varepsilon_0.
 \label{eq:one-bin-signal}
\end{equation}
For a fixed observed signal yield, refitting with changed normalization inputs
gives
\begin{equation}
 \frac{\delta\widehat R_{\mu e}}{\widehat R_{\mu e}}
 =-\frac{\delta N_{\mu,{\rm fit}}^{\rm stop}}{N_\mu^{\rm stop}}
  -\frac{\delta f_{{\rm cap},{\rm fit}}}{f_{\rm cap}}
  -\frac{\delta\varepsilon_{0,{\rm fit}}}{\varepsilon_0}.
 \label{eq:rate-fixed-data}
\end{equation}
A single count cannot distinguish these directions from a change of
\(R_{\mu e}\); control samples or external constraints are required.

\medskip
\noindent\textbf{Common signal mode.}\ 
Let \(u_+\) be a fractional efficiency change common to the elastic and
inelastic signal templates.  Conditional on separately constrained or profiled
backgrounds,
\begin{equation}
 J_{u_+}=R_0J_{R_0}.
 \label{eq:common-signal-direction}
\end{equation}
The change is visible in the predicted spectrum and exactly degenerate with the
fitted rate \(R_0\), but it gives \(\delta\widehat\rho=0\).  A stopped-muon
exposure change has the same property only after its correlated effects on DIO
and other stop-normalized backgrounds are treated consistently.

\medskip
\noindent\textbf{Relative signal mode.}\ 
Let \(u_-\) be a relative efficiency deformation,
\(\boldsymbol h_1\mapsto e^{u_-}\boldsymbol h_1\), that changes the inelastic
component but not the elastic one.  At \(u_-=0\),
\begin{equation}
 J_{u_-}=J_\rho,
 \qquad
 \delta\widehat\rho_{\rm fixed\ summary}=-\delta u_-.
 \label{eq:exact-relative-mimic}
\end{equation}
This is an exact first-order parameter-equivalent deformation.  More exposure cannot
remove it; the relative response must be constrained by calibration, simulation,
or auxiliary data.  Under the true--fit convention,
\(\Delta\rho_{\rm sys}=u_{-,{\rm true}}-u_{-,{\rm fit}}\).

\subsection{Partial parameter degeneracy from momentum response}
\label{subsec:mu2e-projection}

A momentum-scale or resolution change moves events among elastic lines,
inelastic lines, and the steep DIO spectrum.  Its effect can be separated into a
component that mimics \((R_0,\rho)\) and a residual shape component.  To display
the part relevant to \(\rho\) after profiling the freely fitted normalization,
take background parameters as fixed or externally constrained and define
\begin{equation}
 P_\perp
 =I-J_{R_0}(J_{R_0}^{\top}WJ_{R_0})^{-1}J_{R_0}^{\top}W.
 \label{eq:profile-normalization-projector}
\end{equation}
For a fixed summary and one model coordinate \(u\),
\begin{equation}
 \delta\widehat\rho
 =-\frac{(P_\perp J_\rho)^{\top}W(P_\perp J_u)}
         {(P_\perp J_\rho)^{\top}W(P_\perp J_\rho)}\,\delta u.
 \label{eq:mu2e-rho-shift}
\end{equation}
A common signal normalization satisfies \(P_\perp J_{u_+}=0\), whereas a
relative response deformation does not.  For momentum scale and resolution,
Eqs.~\eqref{eq:parameter-projector}--\eqref{eq:degeneracy-ratio} and
\eqref{eq:profile-normalization-projector}--\eqref{eq:mu2e-rho-shift} provide
the fitted parameter shift, the parameter-like fraction \(\eta\), and the
residual \((I-\Pi_\theta)J_u\delta u\).  The reconstructed templates determine
the degree of degeneracy for each response variation.

\begin{table}[ht]
\centering
\caption{Parameter-degeneracy results for two Mu2e analysis summaries.  The
momentum-response entry uses the Run-I-based benchmark of
Sec.~\ref{subsec:mu2e-numerical}; there \(R_0\) is rescaled as
\(r=R_0/10^{-15}\) for numerical conditioning, without changing the tangent
space or \(\eta\).}
\label{tab:mu2e-diagnostics}
\small
\begin{tabularx}{\textwidth}{@{}p{0.29\textwidth}YY@{}}
\toprule
Model deformation & One signal-window count & Extended elastic--inelastic spectrum \\
\midrule
Common signal efficiency &
Exactly rate-like; biases \(R_{\mu e}\) if fixed &
Exactly absorbed by \(R_0\); \(\rho\) is robust \\
Relative inelastic efficiency &
Not identifiable as a separate direction &
Exactly degenerate with \(\rho\) at first order \\
Momentum scale &
Compressed into signal acceptance and DIO leakage &
Partially parameter-equivalent; for the Run-I benchmark,
\(\eta_p=0.331\) and a substantial residual shape remains \\
DIO endpoint-spectrum model &
Background-rate direction &
Can gain internal constraint from DIO-rich lower-momentum bins, subject to
response-model identifiability and external calibration \\
Analysis threshold or binning &
Changes the selected rate &
Changes both identifiability and the corresponding response block \(B_A\) \\
\bottomrule
\end{tabularx}
\end{table}

\subsection{Numerical example of partial parameter equivalence}
\label{subsec:mu2e-numerical}

We now evaluate the projection for a Mu2e Run-I momentum-scale deformation and
test whether the resulting spectral change can be represented locally by a
change of the two physics parameters \((R_0,\rho)\).

The baseline reconstructed-momentum distributions are taken from the published
Mu2e Run-I study \cite{Mu2eRunI2023}.  Figure~20 of that work gives the
conversion-electron (CE), DIO, cosmic-ray, antiproton, and radiative-pion-capture
(RPC) spectra in \(50\,\mathrm{keV}/c\) bins for \(6\times10^{16}\) stopped
muons and the time selection \(640<T_0<1650\,\mathrm{ns}\).  The CE spectrum is
normalized to \(R_{\mu e}=10^{-15}\).  The bin contents were extracted directly
from the vector curves in the published figure.  Integration over the published
search window \(103.60<p_{\rm rec}<104.90\,\mathrm{MeV}/c\) gives \(4.085\) CE
events, compared with \(4.17\) inferred from the published single-event
sensitivity.  The same extraction gives \(0.0375\) DIO, \(0.0467\) cosmic-ray,
\(0.0110\) antiproton, and \(0.0116\) RPC events, in agreement with the
corresponding published values \(0.038\), \(0.046\), \(0.010\), and
approximately \(0.011\) \cite{Mu2eRunI2023}.

For the inelastic component we choose the isoscalar transverse-spin benchmark
of Ref.~\cite{HaxtonRule2025}.  For the three transitions introduced in
Sec.~\ref{subsec:mu2e-question}, Table~IV of that reference gives the central
relative strengths \(r_f=(0.20,0.22,0.30)\), so that
\begin{equation}
 \frac{R_{\rm inel}}{R_0}=0.72,
 \qquad
 \rho_0=\log(0.72)=-0.3285.
 \label{eq:numerical-rho0}
\end{equation}
Because \(\boldsymbol h_1\) in Eq.~\eqref{eq:two-template} denotes the template
per unit total selected-inelastic rate, we set
\(r_{\rm inel}=\sum_f r_f=0.72\), define the normalized internal weights
\(w_f=r_f/r_{\rm inel}\), and construct
\begin{equation}
 \boldsymbol h_1=\sum_f w_f\,T_f[\boldsymbol h_0],
 \label{eq:numerical-h1}
\end{equation}
where \(T_f\) translates the elastic reconstructed template to the
conversion-electron momentum of transition \(f\), following the construction
of Ref.~\cite{HaxtonRule2025}.  Consequently, the nominal inelastic contribution
is
\begin{equation}
 e^{\rho_0}\boldsymbol h_1
 =\sum_f r_f\,T_f[\boldsymbol h_0],
 \label{eq:numerical-h1-nominal}
\end{equation}
so that the total inelastic-to-elastic ratio is applied only once.

For numerical conditioning we use
\[
 r=\frac{R_0}{10^{-15}},
 \qquad
 \theta=(r,\rho),
\]
so that the nominal point is \(r_0=1\).  The binned prediction over
\(102.7\lesssim p_{\rm rec}\lesssim105.0\,\mathrm{MeV}/c\) contains the
elastic and inelastic signals together with the published DIO, cosmic-ray,
antiproton, and RPC backgrounds.  The local metric is the Asimov Poisson weight evaluated at the nominal
point,
\begin{equation}
 W_{ii}=\frac{1}{\mu_i(\theta_0,u_0)}.
 \label{eq:numerical-poisson-weight}
\end{equation}
The matrix \(W\) is held fixed when the finite-difference derivatives and
projections are evaluated, as assumed in Sec.~\ref{subsec:local-estimator}.

Let \(u_p\) denote an additive reconstructed-momentum-scale coordinate,
\(p_{\rm rec}\mapsto p_{\rm rec}+u_p\).  To estimate the local derivative we
use a symmetric \(50\,\mathrm{keV}/c\) half-step,
\begin{equation}
 J_p
 \simeq
 \frac{\boldsymbol\mu(u_p=+0.05\,\mathrm{MeV}/c)
       -\boldsymbol\mu(u_p=-0.05\,\mathrm{MeV}/c)}
      {0.10\,\mathrm{MeV}/c}.
 \label{eq:numerical-scale-derivative}
\end{equation}
The \(100\,\mathrm{keV}/c\) value quoted below is instead the physical benchmark
mismatch to which this local derivative is applied.  Repeating the derivative with \(100\) and \(25\,\mathrm{keV}/c\) half-steps
changes the response coefficients and \(\eta_p\) by at most about \(2\%\).
Across the nominal range and ranges obtained by trimming up to three outer bins,
\(0.315\leq\eta_p\leq0.331\), so the conclusion is stable under these checks.
The resulting TOM response vector is
\begin{equation}
 B_p=J_{\theta,W}^{+}J_p
 =
 \begin{pmatrix}
  0.868\\[-1mm]
 -0.800
 \end{pmatrix}
 (\mathrm{MeV}/c)^{-1}.
 \label{eq:numerical-Bp}
\end{equation}
Thus, with the true--fit sign convention of
Eq.~\eqref{eq:true-fit-parameter-shift}, a
\(+100\,\mathrm{keV}/c\) mismatch of the true momentum scale relative to the
model used in the fit gives
\begin{equation}
 \Delta r=+0.0868,
 \qquad
 \Delta\rho=-0.0800.
 \label{eq:numerical-parameter-shift}
\end{equation}
The inferred elastic normalization therefore increases by about \(8.7\%\),
whereas the inferred inelastic-to-elastic ratio changes by
\(\exp(-0.0800)-1\simeq-7.7\%\).  To first order the absolute inelastic rate
changes only by
\(\Delta R_{\rm inel}/R_{\rm inel}\simeq\Delta r+\Delta\rho\simeq0.7\%\).
In this benchmark the scale mismatch is interpreted mainly as an increase of
the elastic contribution and a corresponding reduction of the
inelastic-to-elastic ratio; the absolute inelastic strength changes little.

The parameter shift reproduces only part of the spectral deformation.  The
projection ratio is
\begin{equation}
 \eta_p
 =
 \frac{\|\Pi_\theta J_p\|_W}{\|J_p\|_W}
 =0.331,
 \qquad
 \frac{\|(I-\Pi_\theta)J_p\|_W}{\|J_p\|_W}
 =0.944.
 \label{eq:numerical-eta}
\end{equation}
Because the two components are \(W\)-orthogonal, \(\eta_p^2\simeq0.110\):
only about \(11\%\) of the squared weighted deformation is absorbed by the
two-dimensional physics-parameter tangent space, while about \(89\%\) remains
outside it.  Figure~\ref{fig:scale-decomposition} shows the corresponding
first-order decomposition for a \(+100\,\mathrm{keV}/c\) displacement.

\begin{figure}[ht]
 \centering
 \includegraphics[width=0.88\textwidth]{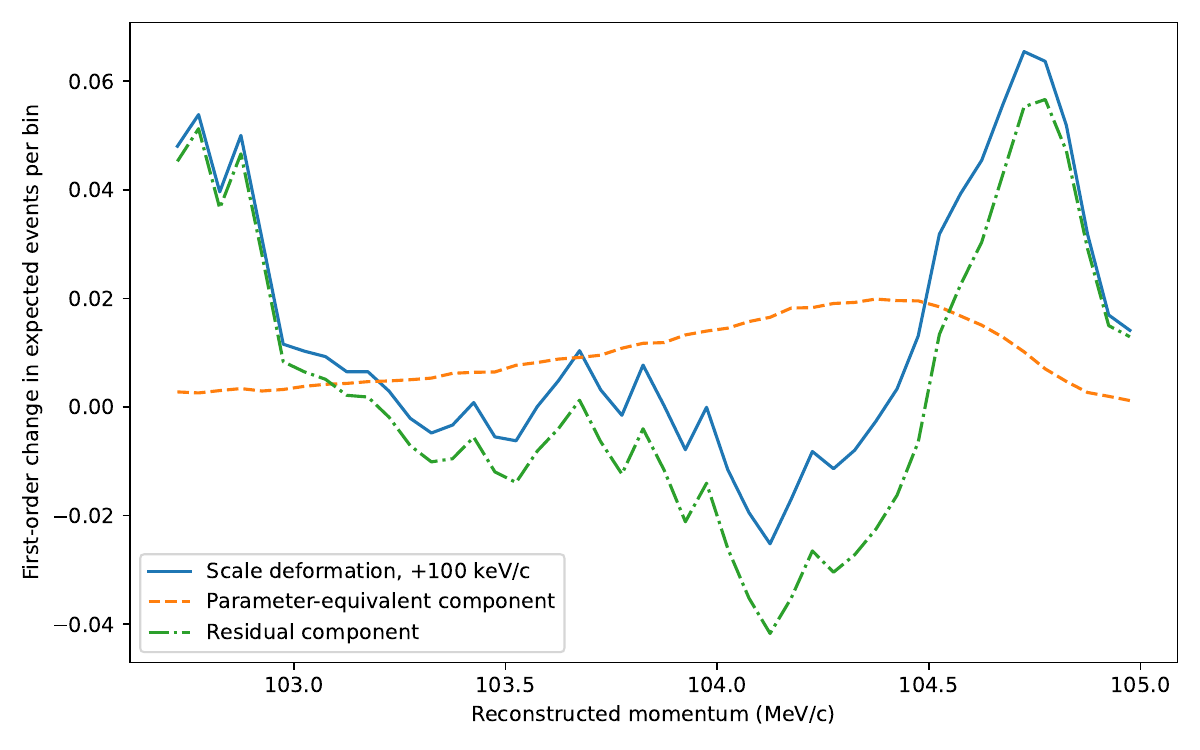}
 \caption{First-order decomposition of a
 \(+100\,\mathrm{keV}/c\) reconstructed-momentum-scale deformation in the
 Run-I-based numerical model.  The parameter-equivalent component is the
 \(W\)-orthogonal projection onto the tangent space generated by variations of
 \(r\) and \(\rho\); the residual component cannot be reproduced by any
 infinitesimal change of these two physics parameters.  Baseline signal and
 background bin contents were extracted from the vector curves in Fig.~20 of
 Ref.~\cite{Mu2eRunI2023}}
 \label{fig:scale-decomposition}
\end{figure}

Here ``not parameter-equivalent'' has a geometric, rather than a statistical,
meaning.  It means that no infinitesimal change
\((\delta r,\delta\rho)\) can reproduce the complete scale-induced spectral
change: \(J_p\delta u_p\notin\operatorname{col}(J_\theta)\).  The projected
part nevertheless shifts the fitted physics parameters according to
Eq.~\eqref{eq:numerical-parameter-shift}; the orthogonal part remains as a
shape mismatch that cannot be absorbed by refitting \(r\) and \(\rho\).
An exactly parameter-equivalent deformation would have zero residual, in which
case increasing the statistics of the same observable would not break the
local degeneracy.

A nonzero residual does not, however, imply that the experiment has enough
statistical power to detect it.  For the Run-I exposure and the fiducial
\(R_0=10^{-15}\) benchmark used here, the quadratic Asimov size of the
remaining deformation is only
\begin{equation}
 \Delta\chi^2_{\rm res}
 =
 \left\|0.10\,(\mathrm{MeV}/c)\,
 (I-\Pi_\theta)J_p\right\|_W^2
 \simeq0.25.
 \label{eq:numerical-residual-chi2}
\end{equation}
Thus the residual is physically distinguishable in principle, because it lies
outside the physics-parameter tangent space, but it would not by itself
constitute a statistically significant lack of fit at this Run-I benchmark.
This separates \emph{identifiability of the deformation} from
\emph{statistical detectability of its residual}.  The first is fixed by the
local model geometry; the second also depends on exposure, signal strength,
backgrounds, and the sampling distribution.

As a separate validation of the numerical deformation, shifting the extracted
DIO search window by \(\mp100\,\mathrm{keV}/c\) changes the integrated DIO
yield by \(+58.9\%\) and \(-37.0\%\), respectively, reproducing the published
Mu2e estimate \(+59\%/-37\%\) \cite{Mu2eRunI2023}.  The agreement provides an
independent check that the extracted spectrum and momentum-scale convention
used in Eq.~\eqref{eq:numerical-scale-derivative} reproduce the relevant Run-I
response.  All numerical inputs used here are traceable to published sources: the
baseline Run-I reconstructed-momentum spectra are extracted from Fig.~20 of
Ref.~\cite{Mu2eRunI2023}, while the excited-state kinematics and relative
inelastic strengths are taken from Ref.~\cite{HaxtonRule2025}.  The
translation, interpolation, boundary treatment, finite-difference convention,
and projection procedure are specified above.

\section{Discussion and conclusions}
\label{sec:discussion}

The reformulated TOM casts robustness as a relation between deformations of a complex forward
model and the local tangent space generated by the physics parameters.  The
stage decomposition retains the physical origin of each deformation, while
\(\Pi_\theta\) separates its parameter-equivalent component from the residual
left outside that tangent space.

Exact parameter equivalence occurs when the deformation lies entirely in
\(\col(J_\theta)\).  Partial equivalence leaves a residual spectral direction
that cannot be absorbed by refitting the selected physics parameters.  The same
criterion applies to preparation, phenomenon modelling, readout, analysis, and
background coordinates.

The \(\mu^-\!\to e^-\) example places this construction in CLFV phenomenology.  A one-bin conversion rate is exactly degenerate with
its normalization inputs.  In a restricted positive-signal model that retains
elastic and inelastic nuclear responses, a common signal normalization changes
the fitted rate but not the operator-sensitive response ratio, whereas a
relative elastic--inelastic efficiency deformation is exactly equivalent, at
first order, to a change of that ratio.  The reconstructed-momentum-scale example supplies a nontrivial numerical
case: a \(100\,\mathrm{keV}/c\) mismatch has
\(\eta_p=0.331\), so it biases the inferred parameters but is far from exact
parameter equivalence.  Most of its weighted spectral deformation lies outside
the \((R_0,\rho)\) tangent space, although at the Run-I benchmark the remaining
shape mismatch is not itself statistically significant.  The example does not imply that a single
aluminium target can determine a general CLFV effective theory; broader
operator discrimination requires additional targets or complementary CLFV
processes \cite{Cirigliano2009}.

The result is local.  It assumes a smooth forward model, an identifiable
parameterization, a fixed weight matrix, and a neighbourhood of a nominal fit.  Large or discrete analysis changes,
nonregular likelihoods near physical boundaries, finite-sample bias, and
weakly constrained jointly fitted nuisances require direct likelihood studies.
These conditions define the regime in which \(B\) and \(\Pi_\theta\) act as
local maps between deformations of the complete model and directions in the
physics-parameter manifold.

Within those limits, TOM gives a criterion for determining when a change of
experimental realization or auxiliary modelling is locally indistinguishable,
at the level of specified observables, from a change of the inferred theory
parameters.  For partial equivalence it also identifies the residual that
cannot be absorbed by those parameters.  This provides a reusable way to ask
which additional observables or constraints are needed to break a local
physics--model degeneracy.

\section*{Data sources and reproducibility}
No new experimental data were generated or analysed.  The numerical
illustration uses projected Mu2e Run-I simulation spectra published in
Ref.~\cite{Mu2eRunI2023}.  In particular, the baseline conversion-electron and
background spectra are extracted from the vector curves in Fig.~20 of that
reference.  The excited-state conversion momenta and the isoscalar
transverse-spin relative strengths used to construct the inelastic template
are taken from Ref.~\cite{HaxtonRule2025} (Tables~II and IV).  The numerical
construction, finite-difference convention, validation checks, and local
projection procedure are described in Sec.~\ref{subsec:mu2e-numerical}.  The
derived bin tables and reproduction script are available from the author.

\section*{Acknowledgments}
This work was carried out under contract No.~89243024CSC000002 with the
U.S. Department of Energy, Office of Science, Office of High Energy Physics.


\begin{thebibliography}{99}
\bibitem{Pronskikh2020}
V.~S. Pronskikh,
``Measurement problems: Contemporary discussions and models,''
\emph{Phys. Usp.} \textbf{63}, 192--200 (2020),
\url{https://doi.org/10.3367/UFNe.2019.06.038583}.

\bibitem{Fock1957}
V.~A. Fock,
``On the interpretation of quantum mechanics,''
\emph{Czech. J. Phys.} \textbf{7}, 643--656 (1957),
\url{https://doi.org/10.1007/BF01946586}.

\bibitem{Busch1995}
P.~Busch, M.~Grabowski, and P.~J. Lahti,
\emph{Operational Quantum Physics},
Springer, Berlin (1995),
\url{https://doi.org/10.1007/978-3-540-49239-9}.

\bibitem{Cook1986}
R.~D. Cook,
``Assessment of local influence,''
\emph{J. R. Stat. Soc. B} \textbf{48}, 133--155 (1986),
\url{https://doi.org/10.1111/j.2517-6161.1986.tb01398.x}.

\bibitem{Watrous2018}
J.~Watrous,
\emph{The Theory of Quantum Information},
Cambridge University Press, Cambridge (2018),
\url{https://doi.org/10.1017/9781316848142}.

\bibitem{Kay1993}
S.~M. Kay,
\emph{Fundamentals of Statistical Signal Processing, Volume I: Estimation
Theory}, Prentice Hall, Upper Saddle River (1993).

\bibitem{CowanEtAl2011}
G.~Cowan, K.~Cranmer, E.~Gross, and O.~Vitells,
\emph{Asymptotic formulae for likelihood-based tests of new physics},
\emph{Eur. Phys. J. C} \textbf{71}, 1554 (2011); Erratum
\textbf{73}, 2501 (2013),
\url{https://doi.org/10.1140/epjc/s10052-011-1554-0}.

\bibitem{KunoOkada2001}
Y.~Kuno and Y.~Okada,
``Muon decay and physics beyond the standard model,''
\emph{Rev. Mod. Phys.} \textbf{73}, 151--202 (2001),
\url{https://doi.org/10.1103/RevModPhys.73.151}.

\bibitem{Kitano2002}
R.~Kitano, M.~Koike, and Y.~Okada,
``Detailed calculation of lepton flavor violating muon--electron conversion
rate for various nuclei,''
\emph{Phys. Rev. D} \textbf{66}, 096002 (2002); Erratum
\emph{Phys. Rev. D} \textbf{76}, 059902 (2007),
\url{https://doi.org/10.1103/PhysRevD.66.096002},
\url{https://doi.org/10.1103/PhysRevD.76.059902}.

\bibitem{Mu2eTDR2015}
L.~Bartoszek \emph{et al.} (Mu2e Collaboration),
``Mu2e Technical Design Report,'' Fermilab-TM-2594,
Fermilab-DESIGN-2014-1 (2015),
\url{https://doi.org/10.2172/1172555}.

\bibitem{Cirigliano2009}
V.~Cirigliano, R.~Kitano, Y.~Okada, and P.~Tuzon,
``On the model discriminating power of $\mu\to e$ conversion in nuclei,''
\emph{Phys. Rev. D} \textbf{80}, 013002 (2009),
\url{https://doi.org/10.1103/PhysRevD.80.013002}.

\bibitem{HaxtonRule2024}
W.~C. Haxton and E.~Rule,
``Distinguishing charged lepton flavor violation scenarios with inelastic
$\mu\to e$ conversion,''
\emph{Phys. Rev. Lett.} \textbf{133}, 261801 (2024),
\url{https://doi.org/10.1103/PhysRevLett.133.261801}.

\bibitem{HaxtonRule2025}
W.~C. Haxton and E.~Rule,
``Nuclear-level effective theory of $\mu\to e$ conversion: Inelastic
process,''
\emph{Phys. Rev. C} \textbf{111}, 025501 (2025),
\url{https://doi.org/10.1103/PhysRevC.111.025501}.

\bibitem{RuleEtAl2023}
E.~Rule, W.~C. Haxton, and K.~McElvain,
``Nuclear-level effective theory of $\mu\to e$ conversion,''
\emph{Phys. Rev. Lett.} \textbf{130}, 131901 (2023),
\url{https://doi.org/10.1103/PhysRevLett.130.131901}.

\bibitem{HaxtonEtAl2023}
W.~C. Haxton, E.~Rule, K.~McElvain, and M.~J. Ramsey-Musolf,
``Nuclear-level effective theory of $\mu\to e$ conversion: Formalism and
applications,''
\emph{Phys. Rev. C} \textbf{107}, 035504 (2023),
\url{https://doi.org/10.1103/PhysRevC.107.035504}.

\bibitem{Mu2eRunI2023}
F.~Abdi \emph{et al.} (Mu2e Collaboration),
``Mu2e Run I sensitivity projections for the neutrinoless
$\mu^-\to e^-$ conversion search in aluminum,''
\emph{Universe} \textbf{9}, 54 (2023),
\url{https://doi.org/10.3390/universe9010054}.

\bibitem{Czarnecki2011}
A.~Czarnecki, X.~Garcia i Tormo, and W.~J. Marciano,
``Muon decay in orbit: Spectrum of high-energy electrons,''
\emph{Phys. Rev. D} \textbf{84}, 013006 (2011),
\url{https://doi.org/10.1103/PhysRevD.84.013006}.

\end{thebibliography}
\end{document}